\long\def\ca#1\cb{} %Use for commenting out: \ca...\cb
 \def\outl#1{}  \def\xa{} \def\xb{}  
 \def\outl#1{\par{\medskip\noindent\hspace*{.5cm}\bf
      \mathversion{bold}#1\mathversion{normal}\smallskip} }
 \long\def\xa#1\xb{}
 \def\outl#1{\par{\medskip\noindent\hspace*{.5cm}\bf
      \mathversion{bold}#1\mathversion{normal}\smallskip} }
 \def\xa{} \def\xb{}  
\newcommand{\ket}[1]{|#1\rangle}               %ket
\newcommand{\colo}{\,\hbox{:}\,}              %colon in math with less space
\newcommand{\bra}[1]{\langle #1|}              %bra
\newcommand{\dya}[1]{\ket{#1}\bra{#1}}
\newcommand{\ip}[2]{\langle #1|#2\rangle}      %the inner product
\newcommand{\HS}{\text{HS}}
\newcommand{\EC}{\mathcal{E}}
\newcommand{\HC}{\mathcal{H}}
\newcommand{\Tr}{{\rm Tr}}
\renewcommand{\geq}{\geqslant}
\renewcommand{\leq}{\leqslant}
\newcommand{\ot}{\otimes}
\newcommand{\ad}{^\dagger}
\newcommand{\al}{\alpha }
\newcommand{\bt}{\beta }
\newcommand{\dl}{\delta }
\newcommand{\ep}{\epsilon}
\newcommand{\lm}{\lambda }
\newcommand{\sg}{\sigma }
\newcommand{\Om}{\Omega }
\newtheoremstyle{example}{\topsep}{\topsep}%
{}%         Body font
{}%         Indent amount (empty = no indent, \parindent = para indent)
{\bfseries}% Thm head font
{.}%        Punctuation after thm head
{   }%     Space after thm head (\newline = linebreak)
{\thmname{#1}\thmnumber{ #2}}%\thmnote{ #3}}%         Thm head spec
\theoremstyle{example}
\newtheorem{theorem}{Theorem}
\newtheorem{proposition}[theorem]{Proposition}
\newtheorem{definition}{Definition}
\newtheorem{lemma}[theorem]{Lemma}
\begin{document}

%\ca
\title{Non-negative discord strengthens the subadditivity of quantum entropy functions}
\author{Patrick J. Coles}
\affiliation{Department of Physics, Carnegie Mellon University, Pittsburgh,
Pennsylvania 15213, USA}
% delete this line to display the current date
\date{Version of \today}

%%% BEGIN DOCUMENT

\begin{abstract}
The definition of quantum discord is generalized to allow for any concave entropy function, and its non-negativity strengthens the subadditivity condition for that entropy function. In a sense, this condition is \emph{intermediate} in strength between subadditivity and strong subadditivity, hence called firm subadditivity, allowing one to further classify entropy functions based on whether they satisfy this intermediate condition. It is proven that the quadratic entropy $1-\Tr(\rho^2)$ satisfies the firm subadditivity condition, whereas some other subadditive Tsallis entropies are not firmly subadditive.
\end{abstract}

\maketitle
%\cb
%\tableofcontents

\section{Introduction}\label{sct1}

The strong subadditivity is ``one of the most important and useful results in quantum information theory" (p. 519 of \cite{NieChu00}). This property of the von Neumann entropy leads to the intuitive notion that information does not increase upon discarding a subsystem or upon quantum operations, and it is helpful in proving certain entropic uncertainty relations \cite{RenesBoileau,ColesEtAl}. It is unfortunate that the von Neumann entropy $S(\rho)=-\Tr (\rho \log \rho)$ is the only entropy known to have this property, as it is not the easiest entropy to calculate due to the matrix logarithm. For example the quadratic (often called linear) entropy $S_Q(\rho)=1-\Tr \rho^2$ simply involves squaring the density operator rather than having to diagonalize it, making $S_Q$ easier to work with. 

Using the strong subadditivity property of $S$, one can show that the quantum mutual information between two systems $a$ and $b$ upper-bounds the Holevo quantity between $a$ and $b$, e.g.\ Lemma 1 of \cite{ColesEtAl}, or for earlier formulations that state this result as the non-negativity of the quantum discord, see Proposition 2 of \cite{OllZur01} or Theorem 1 of \cite{Datta2010}. This result is interesting in that it is purely quantum-mechanical with no classical analog, since the quantum mutual information equals (is the same thing as) the Holevo quantity for classically-correlated systems \cite{OllZur01}. We find it interesting enough to give it a name, \emph{firm subadditivity} (FSA), motivated by the fact that it is stronger than the subadditivity (SA) property (see below), yet it can be proven using the strong subadditivity (SSA) property, so ``firm" seems like an appropriate intermediate adjective. 

In this article, we show that $S$ is not the only entropy function that has the FSA property. Indeed the quadratic entropy $S_Q$ is \emph{firmly subadditive}. On the other hand, numerical counterexamples for particular Tsallis entropies (defined below) show that there are some entropies that are subadditive but not \emph{firmly} subadditive. Thus, the FSA property provides a new means with which to classify quantum entropy functions.

The remainder of this article is organized as follows. Section~\ref{sct2} introduces the notation, e.g.\ defining a generalized quantum discord for any concave entropy function, and gives further background information. Theorem~\ref{thm2} of Section~\ref{sct3} gives four mathematically-equivalent ways of stating the FSA property. For comparison, Theorem~\ref{thm3} of Section~\ref{sct4} gives four conditions for $S$ that are stronger than those in Theorem~\ref{thm2}. The proof that $S_Q$ satisfies FSA is given in Section~\ref{sct5}, followed by a discussion of the Tsallis entropies in Section~\ref{sct6}. Some remarks on a condition, related to zero discord, that generalizes the additivity condition are found in Section~\ref{sct7}, with concluding remarks in Section~\ref{sct8}.

\section{Notation and background}\label{sct2}

Consider the following quantum entropy functions,
\begin{align}
S(\rho)&=-\Tr (\rho \log \rho), \notag\\
S_R(\rho)&=\frac{1}{1-q} \log \Tr (\rho^q),\quad 0 < q\leq 1,\notag\\
S_T(\rho)&=\frac{1}{1-q} [\Tr (\rho^q)-1], \quad 0 < q,\notag\\
S_Q(\rho)&=1-\Tr (\rho^2),
\label{eqn1}
\end{align} 
respectively the von Neumann, Renyi, Tsallis, and quadratic (often called linear) entropies, where $S_R$ and $S_T$ are equal to $S$ in the limit $q=1$, and $S_T$ interpolates between $S$ and $S_Q$ as $q$ goes from 1 to 2. All of these entropies are concave, $S_K(\sum_j p_j\rho_j)\geq \sum_j p_j S_K(\rho_j)$ for $0<p_j<1$ and $\sum p_j=1$, for the ranges of $q$ specified in \eqref{eqn1}. Here, and in what follows we use the notation $S_K$ for any of the entropies in \eqref{eqn1}, dropping the $K$ subscript when specifically referring to von Neumann $S$; or if one wishes think of $S_K$ as a general entropy function \cite{HuYe}, assumed to depend only on the non-zero spectrum of its input and assumed to be concave.

Given a bipartite quantum system with a density operator $\rho_{ab}$, partial traces $\rho_a$ and $\rho_b$, one can define a generalized quantum mutual information:
\begin{equation}
\label{eqn2}
S_K(a\colo b)=S_K(\rho_a)+S_K(\rho_b)-S_K(\rho_{ab}),
\end{equation}
which is non-negative for any $\rho_{ab}$ if $S_K$ is subadditive (SA):
\begin{equation}
\label{eqn3}
S_K(\rho_a)+S_K(\rho_b)\geq S_K(\rho_{ab}),\quad S_K(a\colo b)\geq 0.
\end{equation}
Roughly speaking, thinking of entropy as the absence of information, \eqref{eqn3} says that one gains (never loses) information by obtaining access to the joint system $ab$, relative to the information one has from only local access to $a$ and $b$ separately. The entropies $S$, $S_Q$, and $S_T$ for $q\geqslant 1$ \cite{SubaddivityQ} are SA, but for $0< q< 1$ neither $S_T$ nor $S_R$ are SA. 

The SSA property of $S$ has various equivalent statements (p. 519 of \cite{NieChu00}), one being that for any tripartite state $\rho_{abc}$:
\begin{equation}
\label{eqn4}
S(a\colo bc)\geq S(a\colo b),
\end{equation}
the quantum mutual information does not increase upon discarding a subsystem.

While $S(a\colo b)$ is a very useful global or holistic measure of the correlation between $a$ and $b$, there is reason to believe that further insight can be gained from a measure that quantifies the presence of individual types of information \cite{Gri07} about $a$ in $b$. For example for a two-qubit system $ab$ described by density operator $\rho_{ab}=(\dya{00}+\dya{11})/2$, the $z$-information (0 or 1) about $a$ is perfectly contained in $b$ while the $x$-information (+ or $-$) about $a$ is completely absent from $b$. Precisely speaking, a type of information $P_a=\{P_{aj}\}$ about $a$ is a projective decomposition of the identity $I_a=\sum_j P_{aj}$ on $\HC_a$, as the $P_{aj}$ represent mutually exclusive properties analogous to a classical sample space. However this notion can be extended \cite{ColesEtAl} to POVMs, i.e.\ where the $P_{aj}$ are positive operators not necessarily projectors, provided the $P_{aj}$ are interpreted (physically) as orthogonal projectors on an enlarged Hilbert space, a Naimark extension. We employ the symbol $N_a=\{N_{aj}\}$ for a rank-1 POVM, i.e. a POVM in which all $N_{aj}$ are rank-1 operators. 

The example in the previous paragraph is an extreme case; in general one would like to quantify \emph{partial} information, for which one could use \cite{ColesEtAl} a Holevo quantity:
\begin{equation}
  \chi_K(\{p_j,\rho_j\})=S_K(\sum_jp_j\rho_j)-\sum_jp_jS_K(\rho_j),
\label{eqn5}
\end{equation}
which measures the distinguishability of the distinct $\rho_j$, and is non-negative and equal to zero if and only if all $\rho_j$ are identical, due to the (strict) concavity of all entropies in \eqref{eqn1}. As applied to the ensemble $\{p_j,\rho_{bj}\}$ generated by a POVM $P_a=\{P_{aj}\}$ on $a$ and a bipartite state $\rho_{ab}$,
\begin{equation}
 p_j\rho_{bj}=\Tr_a(P_{aj}\rho_{ab}),\quad p_j=\Tr(P_{aj}\rho_a),
\label{eqn6}
\end{equation}
the associated Holevo quantity
\begin{equation}
  \chi_K(P_a,b)=S_K(\rho_b)-S_K(\rho_b|P_a),
\label{eqn7}
\end{equation}
quantifies the amount of the $P_a$ type of information that is contained in $b$. In \eqref{eqn7} and later on we shall use $|P_a$ to denote conditioning on the result of measurement $P_a$, for example,
\begin{align}
  &S_K(\rho_b|P_a)=\sum p_jS_K(\rho_{bj}), \nonumber\\
   &S_K(\rho_a|P_a)=\sum p_jS_K(\rho_{aj}),\nonumber\\
\label{eqn8}  &S_K(a\colo b|P_a)= \sum p_j[S_K(\rho_{aj})+S_K(\rho_{bj})-S_K(\rho_{abj})],
\end{align}
where one uses the $p_j$ and $\rho_{bj}$ from \eqref{eqn6} here; $S_K(\rho_{aj})=S_K(K_j\rho_{a}K_j\ad)=S_K(\rho_{bcj})$ and $S_K(\rho_{abj})= S_K[(K_j\ot I)\rho_{ab}(K_j\ad\ot I)] =S_K(\rho_{cj})$ where $P_{aj}=K_j\ad K_j$, $c$ is a system that purifies $\rho_{ab}$, $\rho_{bcj}= \Tr_a(P_{aj}\rho_{abc})/p_j$, and $\rho_{cj}= \Tr_a(P_{aj}\rho_{ac})/p_j$.

We shall also find it useful to define a generalized quantum discord \cite{OllZur01}\footnote{Some authors take $\min_{P_a} \dl_K(P_a,b)$ as the discord, we shall do no such minimization in this article.}:
\begin{equation}
\label{eqn9}
\dl_K(P_a,b):= S_K(a\colo b)-\chi_K(P_a,b).
\end{equation}
A useful identity (an immediate consequence of Theorem 3 of \cite{ColesEtAl}), used in proving Theorem~\ref{thm2} below, is 
\begin{equation}
\label{eqn10}
\dl_K(N_a,c)= S_K(\rho_a)-\chi_K(N_a,b),
\end{equation}
where $N_a$ is a rank-1 POVM and $c$ is a system that purifies $\rho_{ab}$; so the quantity on the right-hand-side represents a discord if the purifying system is kept in mind.

\section{Firm subadditivity}\label{sct3}

We define the FSA condition as follows.
\begin{definition} 
\label{def1}
An entropy is \emph{firmly subadditive} or satisfies the \emph{firm subadditivity} condition if for all bipartite systems $ab$ (all dimensions), all $\rho_{ab}$, and all POVMs $P_a$ on $a$,
\begin{equation}
  S_K(a\colo b)\geq \chi_K(P_a,b).
\label{eqn11}
\end{equation}
\end{definition} 

\begin{proposition} 
\label{thm1}
For any concave entropy function $S_K$, the FSA condition implies and strengthens the SA condition; if $S_K$ is firmly subadditive, then it is also subadditive. \openbox
\end{proposition}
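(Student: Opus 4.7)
The plan is to derive the SA inequality $S_K(a\colo b)\geq 0$ as a special case of the FSA inequality \eqref{eqn11} by choosing a POVM on $a$ for which the Holevo quantity on the right-hand side vanishes. The simplest such choice is the trivial one-outcome POVM $P_a=\{I_a\}$: by \eqref{eqn6} it gives $p_1=1$ and $\rho_{b1}=\rho_b$, so the post-measurement ensemble $\{p_j,\rho_{bj}\}$ consists of a single state equal to $\rho_b$, and \eqref{eqn7} yields $\chi_K(P_a,b)=S_K(\rho_b)-S_K(\rho_b)=0$. Substituting into \eqref{eqn11} then reduces FSA to $S_K(a\colo b)\geq 0$, which is exactly SA.

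An equivalent route, which makes the ``firm'' terminology transparent, is to invoke the remark after \eqref{eqn5} that $\chi_K(P_a,b)\geq 0$ for every POVM $P_a$, a consequence of concavity of $S_K$ alone. Given this, FSA says $S_K(a\colo b)\geq\chi_K(P_a,b)\geq 0$, so SA follows immediately, and one sees that FSA is the same inequality as SA but with the lower bound tightened from $0$ to the non-negative number $\chi_K(P_a,b)$, for every choice of $P_a$.

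The word ``strengthens'' in the statement is being used in the weak sense ``is at least as strong as''; the proposition itself requires only the one-line implication above. The fact that FSA is \emph{strictly} stronger than SA for some entropy functions (so that the two notions really do differ) is not part of the proof of Proposition~\ref{thm1}: it is deferred to the Tsallis counterexamples promised for Section~\ref{sct6}. Consequently I do not anticipate any obstacle here; the only point worth being careful about is that the POVM $\{I_a\}$ is an admissible choice in Definition~\ref{def1}, which it is since the definition quantifies over \emph{all} POVMs.
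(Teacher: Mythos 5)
Your proposal is correct, and your second route is exactly the paper's own one-line argument: since concavity gives $\chi_K(P_a,b)\geq 0$ for every POVM, the chain $S_K(a\colo b)\geq \chi_K(P_a,b)\geq 0$ yields SA. Your first route via the trivial POVM $\{I_a\}$ is an equally valid (and essentially equivalent) specialization, so there is nothing to fix.
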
 
The fact that \eqref{eqn11} is stronger than \eqref{eqn3} clearly follows from the fact that $\chi_K(P_a,b)\geq 0$ if $S_K$ is concave. Note that by symmetry one can obviously interchange the roles of $a$ and $b$ in \eqref{eqn11}, and also an alternative way of writing \eqref{eqn11} is
\begin{equation}
  \dl_K(P_a,b)\geq 0,
\label{eqn12}
\end{equation}
the quantum discord is non-negative. Furthermore, to ensure \eqref{eqn11} it is sufficient to check that $S_K(a\colo b)\geq \chi_K(w_a,b) $ for all orthonormal bases $w_a$, since by Naimark extension $P_a$ is equivalent to a projective measurement $P_A$ on a larger Hilbert space $\HC_A$, with $S_K(a\colo b)=S_K(A\colo b)$ and by the concavity of $S_K$, $\chi_K(P_A,b)\leq \chi_K(w_A,b)$ for some basis $w_A$ of $\HC_A$.

Conceptually, one can think of FSA in several ways. One can look at it as a statement that the ``quantum" correlations are bigger than the ``classical" correlations, but admittedly this is ambiguous language. Somewhat more precisely, the left-hand-side of \eqref{eqn11} is a global or holistic measure of the correlation between $a$ and $b$, whereas the right-hand-side quantifies a \emph{single type} of information about $a$ in $b$, so FSA says the global measure is bigger than the single-type measure. Alternatively, it says that the amount of information one gains by obtaining access to the joint system $ab$, relative to the local information in $a$ and $b$ separately, is not less than the amount that a single type of information about $a$ is present in $b$. 

The following theorem gives three other conditions that are mathematically equivalent to \eqref{eqn11}, for example, the Holevo quantity measuring the $P_a$ information in $b$ is bounded-above by the entropy of $\rho_a$ (part ii), and the Holevo quantity is non-increasing under the action of a quantum channel for an input ensemble of pure states (part iv). As written below, these inequalities show that, if $S_K$ is FSA, then this immediately implies some possibly useful bounds on the Holevo quantity $\chi_K$ associated with $S_K$.
\begin{theorem}
\label{thm2}
For any concave entropy function $S_K$ depending only on the (non-zero) spectrum of its input, the following four conditions are equivalent, each implying the other three:  

(i) The entropy $S_K$ is firmly subadditive; i.e.\ for any bipartite state $\rho_{ab}$ and any POVM $P_a$ on $a$,
\begin{equation}
\label{eqn13}
\chi_K(P_a,b)\leq S_K(a\colo b).
\end{equation}

(ii) For any bipartite state $\rho_{ab}$ with $\rho_a=\Tr_b(\rho_{ab})$ and any POVM $P_a$ on $a$, $S_K$ satisfies: 
\begin{equation}
\label{eqn14}
\chi_K(P_a,b)\leq S_K(\rho_a).
\end{equation}

(iii) For any tripartite \emph{pure} state $\rho_{abc}$ and \emph{rank-1} POVM $N_a$ on $a$, $\chi_K$ satisfies:
\begin{equation}
\label{eqn15}
\chi_K(N_a,b)\leq \chi_K(N_a,bc).
\end{equation}

(iv) For any quantum channel $\EC$ and any ensemble of pure states $\{p_j,\dya{\psi_j}\}$, $\chi_K$ satisfies:
\begin{equation}
\label{eqn16}
\chi_K(\{p_j,\EC(\dya{\psi_j})\}) \leq \chi_K(\{p_j,\dya{\psi_j}\}).
\end{equation}
\end{theorem}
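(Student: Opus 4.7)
I would establish the four-way equivalence as a chain of three pairwise equivalences, (i)$\Leftrightarrow$(ii), (ii)$\Leftrightarrow$(iii), and (iii)$\Leftrightarrow$(iv). The workhorse for the first link is identity~\eqref{eqn10}; the other two links rest on the observation that a rank-1 POVM on one part of a pure multipartite state leaves a pure state on the complement. In each case, once the rank-1 version of the equivalence is settled, the passage to general POVMs is handled uniformly: on the (i) side by the Naimark argument already given after Definition~\ref{def1}, and on the (ii)--(iv) side by the standard fact (from concavity of $S_K$) that refining a POVM to a rank-1 POVM can only increase the Holevo quantity.

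For (i)$\Leftrightarrow$(ii), fix any $\rho_{ab}$ and take a purification $\rho_{abc}$. For a rank-1 POVM $N_a$ on $a$, identity~\eqref{eqn10} reads $\dl_K(N_a,c)=S_K(\rho_a)-\chi_K(N_a,b)$, so FSA applied to the $a|c$ cut of the purification (left side $\geq 0$) is literally (ii) applied to $\rho_{ab}$ (right side $\geq 0$). Since every bipartite state arises as the $a|c$ marginal of some purification, this yields the rank-1 equivalence, and the extension to general POVMs closes the link.

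For (ii)$\Leftrightarrow$(iii), the key observation is that when $\rho_{abc}$ is pure and $N_a$ is rank-1 on $a$, each post-measurement state $\rho_{bcj}$ is pure and $\rho_{bc}$ has the same non-zero spectrum as $\rho_a$, so
\begin{equation*}
\chi_K(N_a,bc) = S_K(\rho_{bc}) - \sum_j p_j S_K(\rho_{bcj}) = S_K(\rho_a),
\end{equation*}
reducing (iii) to the rank-1 case of (ii) applied to the $a|b$ cut of the pure tripartite state; again every $\rho_{ab}$ has a purification, and refinement lifts the rank-1 restriction.

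For (iii)$\Leftrightarrow$(iv), I would use a Stinespring dilation $\EC(\sg)=\Tr_c(V\sg V\ad)$ with isometry $V\colon \HC_b\to \HC_{b'}\ot\HC_c$. Given a pure-state ensemble $\{p_j,\dya{\psi_j}\}$ on $\HC_b$, form the pure state $\ket{\Psi}=\sum_j \sqrt{p_j}\,\ket{j}_a\ot V\ket{\psi_j}$ on $\HC_a\ot\HC_{b'}\ot\HC_c$ together with the rank-1 POVM $N_{aj}=\dya{j}$ on $a$. Each post-measurement state on $b'c$ is the pure vector $V\ket{\psi_j}$, and the marginal $\rho_{b'c}=V(\sum_j p_j\dya{\psi_j})V\ad$ has the same non-zero spectrum as the ensemble average, so $\chi_K(N_a,b'c)=\chi_K(\{p_j,\dya{\psi_j}\})$, while $\chi_K(N_a,b')=\chi_K(\{p_j,\EC(\dya{\psi_j})\})$. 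Hence (iii) applied to this state gives (iv). Conversely, (iv) with $\EC=\Tr_c$ applied to the pure-state ensemble $\{p_j,\rho_{bcj}\}$ produced by a rank-1 POVM on a purification directly gives (iii). The main obstacle is less any single hard step than the bookkeeping of the rank-1/POVM passages and the consistent use of the fact that $S_K$ depends only on the non-zero spectrum; once identity~\eqref{eqn10} and the pure-post-measurement-state observation are in hand, all four conditions collapse into one another.
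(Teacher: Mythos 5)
Your proposal is correct and rests on the same ingredients as the paper's proof: the purification identity~\eqref{eqn10} (equivalently $\chi_K(N_a,bc)=S_K(\rho_a)$ for pure $\rho_{abc}$), concavity to pass between rank-1 and coarse-grained POVMs, and a Stinespring dilation plus steering construction linking channels to tripartite pure states. The only difference is organizational --- you prove three two-way links instead of the paper's one-way cycle (iv)$\Rightarrow$(iii)$\Rightarrow$(ii)$\Rightarrow$(i)$\Rightarrow$(iv), and you derive (iv) from (iii) rather than from (i), which is a harmless simplification.
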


\begin{proof}
We will show the equivalence of all four parts by showing (iv)$\Rightarrow$(iii)$\Rightarrow$(ii)$\Rightarrow$(i)$\Rightarrow$(iv). Starting from (iv), set the pure states $\dya{\psi_j}=\Tr_a[N_{aj}\rho_{abc}]/p_j= \rho_{bcj}$ to be the conditional density operators on $bc$ associated with the rank-1 POVM $N_a$ and pure state $\rho_{abc}$; this corresponds to setting $\chi_K(\{p_j,\dya{\psi_j}\})=\chi_K(N_a,bc)$. Now set $\EC$ in (iv) to be the channel that partial traces over system $c$, i.e.\ $\rho_{bj}=\EC(\rho_{bcj})=\Tr_c(\rho_{bcj})$, and the left-hand-side of \eqref{eqn16} becomes $\chi_K(N_a,b)$, proving (iii). By Theorem 2 of \cite{ColesEtAl}, $\chi_K(N_a,bc)=S_K(\rho_a)$, and by the concavity of $S_K$, $\chi_K(P_a,b) \leq \chi_K(N_a,b)$, where $P_a$, a general POVM, is a coarse-graining of some rank-1 POVM $N_a$, proving (ii). Now apply (ii) to $N_a$ and add $\chi_K(N_a,c)-\chi_K(N_a,b)=S_K(\rho_c)-S_K(\rho_b)$, see Theorem 3 of \cite{ColesEtAl}, to both sides giving $\chi_K(N_a,c)\leq S_K(a\colo c)$; again invoking concavity $\chi_K(P_a,c)\leq \chi_K(N_a,c)$ and interchange the labels $b$ and $c$ to prove (i). To prove (iv), introduce a reference system $a'$ of the same dimension as $a$; by Stinespring's dilation theorem any channel $\EC(\rho_{a'})=\Tr_c(V\rho_{a'}V\ad)$ from $a'$ to $c$ can be constructed by an isometry $V$ from $a'$ to $bc$ followed by a partial trace over $b$, and any pure-state ensemble $\{p_j,\dya{\psi_j}\}$ on the input $a'$ to $\EC$ can be produced by the action of some rank-1 POVM $N_a$ on $a$ with an appropriate choice of a partially-entangled pure state (pre-probability) $\ket{\Phi}$ on $aa'$, see Sec. IIB of \cite{ColesEtAl} for the explicit construction. Now apply (i) to $N_a$ and the tripartite pure state $\ket{\Om}=(I_a\ot V)\ket{\Phi}$ generated by acting $V$ on the $a'$ part of $\ket{\Phi}$, giving $\chi_K(N_a,c)\leq S_K(\rho_a)$ [after adding $\chi_K(N_a,c)-\chi_K(N_a,b)=S_K(\rho_c)-S_K(\rho_b)$ to both sides of \eqref{eqn13}.] But $\chi_K(N_a,c)=\chi_K(\{p_j,\EC(\dya{\psi_j})\})$ and $S_K(\rho_a)= \chi_K(N_a,a')=\chi_K(\{p_j,\dya{\psi_j}\})$, proving (iv).  
\end{proof}

\section{von Neumann entropy}\label{sct4}

For comparison, consider the following properties of $S$, each of which is stronger than its counterpart in Theorem~\ref{thm2}, and each of which can be proved \cite{SWW96} by invoking both the strong subadditivity \eqref{eqn4} and the generalized additivity [see \eqref{eqn27} below] of $S$. All four properties below are equivalent in the sense that if a general entropy $S_K$ satisfied one of them, it would automatically satisfy all of them.
\begin{theorem}
\label{thm3}
The von Neumann entropy $S$ and its associated Holevo quantity $\chi$ satisfy the following four \emph{equivalent} conditions:

(i) For any bipartite state $\rho_{ab}$ and any POVM $P_a$ on $a$:
\begin{equation}
\label{eqn17}
\chi(P_a,b)\leq S(a\colo b)-S(a\colo b|P_a).
\end{equation}

(ii) For any bipartite state $\rho_{ab}$ with $\rho_a=\Tr_b(\rho_{ab})$ and any POVM $P_a$ on $a$: 
\begin{equation}
\label{eqn18}
\chi(P_a,b)\leq S(\rho_a)-S(\rho_a|P_a).
\end{equation}

(iii) For any tripartite state $\rho_{abc}$ and any POVM $P_a$ on $a$:
\begin{equation}
\label{eqn19}
\chi(P_a,b) \leq \chi(P_a,bc).
\end{equation}

(iv) For any quantum channel $\EC$ and any ensemble $\{p_j,\rho_j\}$:
\begin{equation}
\label{eqn20}
\chi(\{p_j,\EC(\rho_j)\}) \leq \chi(\{p_j,\rho_j\}).
\end{equation}
\end{theorem}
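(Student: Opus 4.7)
I will prove all four conditions equivalent by the cyclic strategy (iv)$\Rightarrow$(iii)$\Rightarrow$(ii)$\Rightarrow$(i)$\Rightarrow$(iv) used in Theorem~\ref{thm2}, modified to track the ``conditional'' corrections that appear when one drops the rank-1 POVM and purity assumptions. The key new ingredient (beyond SSA) is the generalized additivity of $S$ on classical-quantum states,
\begin{equation*}
S\Bigl(\sum_j p_j |j\rangle\langle j|_X \otimes \rho_j\Bigr) = H(p) + \sum_j p_j S(\rho_j),
\end{equation*}
which identifies any Holevo quantity $\chi(\{p_j,\rho_j\})$ with the quantum mutual information $S(X\colo B)$ of the associated CQ state, making SSA directly applicable.

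First, (iv) follows from SSA via the Schumacher-Westmoreland-Wootters argument: form $\sigma_{XB}=\sum_j p_j |j\rangle\langle j|_X \otimes \rho_j$, dilate $\EC$ through a Stinespring isometry $V:B\to B'E$, use isometric invariance to write $S(X\colo B'E)_{(I\otimes V)\sigma(I\otimes V\ad)}=S(X\colo B)_\sigma$, and apply SSA \eqref{eqn4} to drop $E$. Then (iv)$\Rightarrow$(iii) follows by taking the ensemble $\{p_j,\rho_{bcj}\}$ and the channel $\EC=\Tr_c$. For (iii)$\Rightarrow$(ii), let $c$ purify $\rho_{ab}$: since acting any $K_j$ (with $P_{aj}=K_j\ad K_j$) on the $a$-part of a pure state preserves the purity of the global vector, each $\rho_{abcj}$ is pure, so $S(\rho_{bcj})=S(\rho_{aj})$ and hence $\chi(P_a,bc) = S(\rho_a)-S(\rho_a|P_a)$, yielding (ii).

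For (ii)$\Rightarrow$(i) the key identity, obtained by distributing entropies and using $\chi(P_a,X)=S(\rho_X)-S(\rho_X|P_a)$, is
\begin{equation*}
S(a\colo b) - S(a\colo b|P_a) = [S(\rho_a)-S(\rho_a|P_a)] + \chi(P_a,b) - \chi(P_a,ab),
\end{equation*}
which reduces (i) to $\chi(P_a,ab) \leq S(\rho_a)-S(\rho_a|P_a)$. This latter bound follows from (ii) applied to the bipartite state $\rho_{a,c}$ (with $c$ purifying $\rho_{ab}$), combined with the identity $\chi(P_a,c)=\chi(P_a,ab)$, which is immediate from the purity of $\rho_{abc}$ and each $\rho_{abcj}$. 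Finally, (i)$\Rightarrow$(iv) follows the same Stinespring + partially-entangled-state construction of Sec.~IIB of \cite{ColesEtAl} as in Theorem~\ref{thm2}, extended to arbitrary ensembles by first purifying each $\rho_j$: for the resulting rank-1 POVM $N_a$ on a tripartite pure state, $S(a\colo b|N_a)=0$ and (i) collapses to Theorem~\ref{thm2}(i), whose shift identity $\chi(N_a,c)-\chi(N_a,b)=S(\rho_c)-S(\rho_b)$ then yields the desired DPI.

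The main obstacle is the (ii)$\Rightarrow$(i) step, specifically the identification $\chi(P_a,ab)=\chi(P_a,c)$: this requires both $\rho_{abc}$ \emph{and} each $\rho_{abcj}$ to be pure, and the latter holds for any POVM (not just rank-1) because $K_j\otimes I$ maps pure vectors to pure vectors. This feature is precisely what allows Theorem~\ref{thm3} to go through where an analogous Theorem~\ref{thm2}-level argument for general POVMs would not; each ``$|P_a$'' correction absorbs exactly the slack that moving from the pure-state/rank-1 setting to the general one introduces.
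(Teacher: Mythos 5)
Most of your cycle is sound and close in spirit to the paper's: your direct SSA proof of (iv) via the classical-quantum state $\sg_{XB}=\sum_j p_j\dya{j}_X\ot\rho_j$ is a legitimate anchor (the paper instead cites \cite{SWW96} for (iii)), your (iv)$\Rightarrow$(iii) and (iii)$\Rightarrow$(ii) steps coincide with the paper's, and your (ii)$\Rightarrow$(i) route via the identity $S(a\colo b)-S(a\colo b|P_a)=[S(\rho_a)-S(\rho_a|P_a)]+\chi(P_a,b)-\chi(P_a,ab)$ together with $\chi(P_a,ab)=\chi(P_a,c)$ is correct and arguably cleaner than the paper's ``add the shift and relabel $b\leftrightarrow c$'' argument. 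Note also that since you prove (iv) outright from SSA, the claim that $S$ satisfies all four conditions survives independently of what follows.

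The genuine gap is in (i)$\Rightarrow$(iv), and it matters because that arrow is what the stated \emph{equivalence} of the four conditions rests on. You propose to handle an arbitrary ensemble $\{p_j,\rho_j\}$ by purifying each $\rho_j$ to $\dya{\psi_j}$ and collapsing to the rank-1/pure-state machinery of Theorem~\ref{thm2}. But the pure-state DPI obtained that way bounds $\chi(\{p_j,\EC(\rho_j)\})$ by $\chi(\{p_j,\dya{\psi_j}\})=S(\sum_j p_j\dya{\psi_j})$, \emph{not} by $\chi(\{p_j,\rho_j\})=S(\sum_j p_j\rho_j)-\sum_j p_jS(\rho_j)$; purification inflates the input Holevo quantity by (at least) $\sum_j p_jS(\rho_j)$, and there is no inequality in the needed direction between the two. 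Indeed, if the general-ensemble DPI reduced to the pure-state DPI by purification, Theorem~\ref{thm2}(iv) and Theorem~\ref{thm3}(iv) would coincide, erasing the distinction the paper is built on. The repair is the paper's: realize the \emph{mixed} ensemble $\{p_j,\rho_j\}$ on the reference system $a'$ directly by steering with a \emph{general} POVM $P_a$ on a purification $\ket{\Phi}_{aa'}$ (Hughston--Jozsa--Wootters), apply (i) to $P_a$ and the tripartite pure state $(I_a\ot V)\ket{\Phi}$, and use the shift identity with its conditional terms to land on $\chi(P_a,c)\leq S(\rho_a)-S(\rho_a|P_a)=\chi(P_a,a')=\chi(\{p_j,\rho_j\})$; the $S(\cdot|P_a)$ corrections are exactly what recover the $-\sum_j p_jS(\rho_j)$ term that your purification detour discards.
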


\begin{proof}
Part (iii) was proven in \cite{SWW96} and we will show (iii)$\Rightarrow$(ii)$\Rightarrow$(i)$\Rightarrow$(iv)$\Rightarrow$(iii). For (ii), apply (iii) to a pure state $\rho_{abc}$, so $\chi(P_a,bc)=S(\rho_{bc})-\sum_j p_jS(\rho_{bcj})=S(\rho_{a})-\sum_j p_jS(\rho_{aj})$. For (i), let $c$ purify $\rho_{ab}$, add $\chi(P_a,c)-\chi(P_a,b)=S(\rho_c)-S(\rho_b)+S(\rho_b|P_a)-S(\rho_c|P_a)$ to both sides of \eqref{eqn18}, and interchange the labels $b$ and $c$ in the resulting inequality. For (i)$\Rightarrow$(iv), apply the same argument used in going from (i)$\Rightarrow$(iv) in the proof of Theorem~\ref{thm2}, replacing the pure-state ensemble with an arbitrary ensemble $\{p_j,\rho_j\}$ and $N_a$ with an arbitrary POVM $P_a$. Likewise for (iv)$\Rightarrow$(iii), let $\{p_j,\rho_j\}$ in (iv) be the ensemble generated by $p_j\rho_j=\Tr_a[P_{aj}\rho_{abc}]$, and set $\EC$ in (iv) to be the channel that partial traces over $c$.
\end{proof}

\section{Quadratic entropy}\label{sct5}

The proof that $S_Q$ is firmly subadditive relies on writing $\chi_Q$ in terms of the Hilbert-Schmidt distance:
\begin{equation}
\label{eqn21}
D_{\HS}(\rho,\sg)=\Tr[(\rho-\sg)^2]
\end{equation}
and using a property of $D_{\HS}$ given in the following lemma.

\begin{lemma}
\label{thm4}
Let $\EC$ be any quantum channel, let $\rho=\dya{\phi}$ and $\sg=\dya{\psi}$ be \emph{pure} states,

(i) For $q\geq 1$, with $|X|=(X\ad X)^{1/2}$,
\begin{equation}
\label{eqn22}
\Tr[| \rho-\sg|^q]\geq\Tr[|\EC(\rho)-\EC(\sg)|^q].
\end{equation}

(ii) Setting $q=2$ gives
\begin{equation}
\label{eqn23}
D_{\HS}(\rho,\sg)\geq D_{\HS}[\EC(\rho),\EC(\sg)].
\end{equation}
\end{lemma}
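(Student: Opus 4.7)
The plan is to exploit the rank structure of pure-state differences. Since $\rho-\sg$ is traceless Hermitian of rank at most two, its nonzero eigenvalues must come in an opposite pair $\pm\mu$; computing $\Tr[(\rho-\sg)^2] = 2(1-|\ip{\phi}{\psi}|^2)$ pins down $\mu = \sqrt{1-|\ip{\phi}{\psi}|^2}$. Consequently one can write $\rho-\sg = \mu(\tau_1-\tau_2)$ for some pair of \emph{orthogonal} pure states $\tau_1,\tau_2$ (the eigenprojectors), so that $\Tr|\rho-\sg|^q = 2\mu^q$ and, by linearity, $\EC(\rho)-\EC(\sg) = \mu\,Y$ with $Y := \EC(\tau_1)-\EC(\tau_2)$. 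The lemma therefore reduces to proving the single inequality $\|Y\|_q^q \leq 2$ for all $q\geq 1$.

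The next step is to control $Y$ at the two extreme Schatten endpoints. For $q=1$, the standard contractivity of the trace distance under CPTP maps (a well-known data-processing statement, proved by Jordan-decomposing the input or by the dual unital-CP formulation) gives $\|Y\|_1 \leq \|\tau_1-\tau_2\|_1 = 2$. For $q=\infty$, since $\EC(\tau_1)$ and $\EC(\tau_2)$ are density operators and hence satisfy $0\preceq \EC(\tau_i)\preceq I$, the operator sandwich $-I \preceq -\EC(\tau_2) \preceq Y \preceq \EC(\tau_1) \preceq I$ immediately yields $\|Y\|_\infty \leq 1$.

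Interpolating between these two endpoints for arbitrary $q\geq 1$ is then elementary: for any Hermitian $Y$,
\begin{equation*}
\|Y\|_q^q \;=\; \sum_i |\lm_i(Y)|^q \;\leq\; \|Y\|_\infty^{q-1}\sum_i|\lm_i(Y)| \;=\; \|Y\|_\infty^{q-1}\,\|Y\|_1 \;\leq\; 2.
\end{equation*}
Combining everything, $\Tr|\EC(\rho)-\EC(\sg)|^q = \mu^q\|Y\|_q^q \leq 2\mu^q = \Tr|\rho-\sg|^q$, which proves (i); part (ii) is the special case $q=2$ (and, if desired, also admits an even shorter direct argument using $\Tr[\EC(\tau_i)^2]\leq 1$ and $\Tr[\EC(\tau_1)\EC(\tau_2)]\geq 0$).

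I do not anticipate a serious obstacle, because the only nonelementary ingredient is trace-distance contractivity. The purity hypothesis enters at exactly one place, namely in extracting the scalar $\mu$ so that the residual operator $Y$ is a difference of density operators with spectra in $[0,1]$; for generic mixed $\rho,\sg$ this factorization breaks down, and indeed Schatten-$q$ contractivity is known to fail for generic CPTP maps when $q>1$, so purity is essential to this route.
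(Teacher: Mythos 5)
Your proof is correct, and it opens exactly as the paper does---writing $\rho-\sg=\lm(\dya{\al}-\dya{\bt})$ with $\ip{\al}{\bt}=0$, so that everything reduces to showing $\Tr\,|Y|^q\leq 2$ for $Y=\EC(\dya{\al})-\EC(\dya{\bt})$---but the finish is genuinely different. The paper Jordan-decomposes $Y=A-B$ with $A,B\geq 0$, $AB=0$, $\Tr A=\Tr B\leq 1$, and exploits the orthogonal supports together with the elementary inequality $\Tr(A^q)\leq(\Tr A)^q\leq 1$ to conclude $\Tr|A-B|^q=\Tr(A^q)+\Tr(B^q)\leq 2$. You instead bound the two Schatten endpoints, $\|Y\|_1\leq 2$ (trace-norm contractivity) and $\|Y\|_\infty\leq 1$ (from the operator sandwich $-\EC(\dya{\bt})\preceq Y\preceq\EC(\dya{\al})$), and interpolate via $\sum_i|\lm_i(Y)|^q\leq\|Y\|_\infty^{q-1}\|Y\|_1$. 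The two routes are equally valid and give the same constant; indeed the paper's unproved claim $\Tr A\leq 1$ is itself a one-line positivity argument of the same flavor as your $\|Y\|_\infty\leq 1$ bound (take $P$ the support projector of $A$, then $A=PYP\preceq P\EC(\dya{\al})P$), so the paper's version is marginally more self-contained, whereas yours imports trace-norm contractivity as a black box. What your version buys is a cleaner separation of where each hypothesis enters---the $q=1$ endpoint is pure data processing, the $q=\infty$ endpoint is where positivity and normalization of the channel outputs are used---and your closing observation that purity is essential (since Schatten-$q$ contractivity fails for general mixed inputs when $q>1$) correctly identifies why the rank-two reduction is the load-bearing step in both proofs.
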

\begin{proof}
Write $\rho-\sg=\lm (\dya{\al}-\dya{\bt})$ where $\ip{\al}{\bt}=0$ and $0\leq \lm \leq 1$, and $\EC(\rho)-\EC(\sg)=\lm [\EC(\dya{\al})-\EC(\dya{\beta})]=\lm[A-B]$ where $A$ and $B$ are positive operators with orthogonal support $AB=0$, and $\Tr A=\Tr B \leq 1$. Then $\Tr[| \rho-\sg|^q]=2\lm^q\geq \lm^q [(\Tr A)^q+(\Tr B)^q]\geq \lm^q [\Tr (A^q)+\Tr (B^q)]= \lm^q \Tr |A-B|^q = \Tr[|\EC(\rho)-\EC(\sg)|^q]$.
\end{proof}

\begin{theorem}
\label{thm5}
The quadratic entropy $S_Q(\rho)=1-\Tr(\rho^2)$ satisfies the firm subadditivity condition and thus has all four properties given in Theorem~\ref{thm2}.
\end{theorem}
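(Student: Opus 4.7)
The plan is to prove part (iv) of Theorem~\ref{thm2} directly for $S_Q$ and then invoke the equivalence of the four conditions. The reason I would target (iv) rather than (i) is that Lemma~\ref{thm4}(ii) only guarantees contractivity of the Hilbert--Schmidt distance under channels when the inputs are \emph{pure}, and (iv) is precisely the statement that restricts to pure-state ensembles.

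The key bridge between $\chi_Q$ and $D_{\HS}$ is a simple algebraic identity: for any ensemble $\{p_j,\rho_j\}$ with average $\bar\rho=\sum_j p_j\rho_j$,
\begin{equation*}
\chi_Q(\{p_j,\rho_j\})=\sum_j p_j\Tr(\rho_j^2)-\Tr(\bar\rho^2)=\tfrac{1}{2}\sum_{j,k}p_jp_k\,D_{\HS}(\rho_j,\rho_k),
\end{equation*}
which I would derive by expanding $\Tr(\bar\rho^2)=\sum_{j,k}p_jp_k\Tr(\rho_j\rho_k)$ and using $\Tr(\rho_j\rho_k)=\tfrac{1}{2}[\Tr(\rho_j^2)+\Tr(\rho_k^2)-D_{\HS}(\rho_j,\rho_k)]$, after which the diagonal-versus-off-diagonal terms telescope. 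This is just bookkeeping, but it is the essential observation that turns a quadratic-entropy statement into a statement about pairwise distances.

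With that identity in hand, the proof of (iv) is immediate: for any channel $\EC$ and any pure-state ensemble $\{p_j,\dya{\psi_j}\}$, apply Lemma~\ref{thm4}(ii) term by term to each pair $(\dya{\psi_j},\dya{\psi_k})$ to get $D_{\HS}(\dya{\psi_j},\dya{\psi_k})\geq D_{\HS}[\EC(\dya{\psi_j}),\EC(\dya{\psi_k})]$, and weight by the non-negative coefficients $p_jp_k$. Summing yields
\begin{equation*}
\chi_Q(\{p_j,\dya{\psi_j}\})\geq \chi_Q(\{p_j,\EC(\dya{\psi_j})\}),
\end{equation*}
which is exactly \eqref{eqn16} for $K=Q$. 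Theorem~\ref{thm2} then delivers firm subadditivity and the other three equivalents for free.

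I do not expect any real obstacle beyond the algebraic identity, and perhaps the conceptual point of noticing that the pure-state hypothesis in Lemma~\ref{thm4} matches exactly the pure-state hypothesis of Theorem~\ref{thm2}(iv)---so the route through (iv) is the one that avoids needing contractivity of $D_{\HS}$ for mixed inputs (which in fact fails for some channels, and would otherwise be the hard part).
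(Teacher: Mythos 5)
Your proposal is correct and follows essentially the same route as the paper: rewrite $\chi_Q$ as a $p_jp_{j'}$-weighted sum of pairwise Hilbert--Schmidt distances [the paper's \eqref{eqn25}], apply Lemma~\ref{thm4}(ii) term by term to the pure-state ensemble, and conclude condition (iv) of Theorem~\ref{thm2}. The algebraic identity and the observation that the pure-state hypotheses of Lemma~\ref{thm4} and Theorem~\ref{thm2}(iv) match are exactly the paper's argument.
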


\begin{proof} Through simple algebra, rewrite $\chi_Q$ as:
\begin{align}
\label{eqn24}\chi_Q(\{p_j,\rho_j\})&=S_Q(\rho)-\sum_j p_j S_Q(\rho_j)\\
\label{eqn25}&=\sum_{j,j'>j}p_jp_{j'}D_{\HS}(\rho_j,\rho_{j'}).
\end{align}
By Lemma~\ref{thm4}, $\sum_{j,j'>j}p_jp_{j'}D_{\HS}(\dya{\psi_j},\dya{\psi_{j'}}) \geq \sum_{j,j'>j}p_jp_{j'}D_{\HS}[\EC(\dya{\psi_j}),\EC(\dya{\psi_{j'}})]$, meaning $\chi_Q(\{p_j,\dya{\psi_j}\}) \geq \chi_Q(\{p_j,\EC(\dya{\psi_j})\})$, which is condition (iv) of Theorem~\ref{thm2}, equivalent to the other three conditions of Theorem~\ref{thm2}.
\end{proof}

\begin{figure}[t]
\begin{center}
\includegraphics[width=8cm]{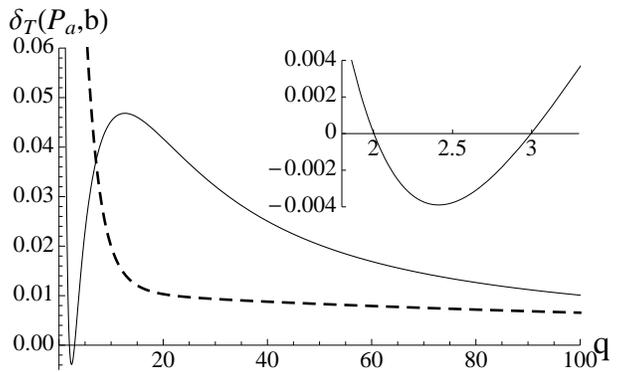}
\caption{%
 Discord $\dl_T(P_a,b)$ as a function of $q$ for two different choices (solid and dashed curves) of $\rho_{ab}$ and $P_a$. The inset zooms in on the region where the discord goes negative for the solid curve.\label{fgr1}}
\end{center}
\end{figure}

\section{Tsallis entropy}\label{sct6}

Since both $S$ and $S_Q$ are FSA and both belong to the Tsallis entropy family, it is natural to ask whether other Tsallis entropies are FSA, in particular for exponent $q\geq 1$ in \eqref{eqn1}, for which $S_T$ is subadditive. (Since $S_T$ is not subadditive over the range $0<q<1$, it cannot be firmly subadditive over this range.) We have found numerical examples that violate the FSA condition for $S_T$ over the range $2+\ep<q<3-\ep$ for $\ep=.005$. This is illustrated in Fig.~\ref{fgr1}, which plots the Tsallis discord $\dl_T(P_a,b)$ as a function of $q$ for two examples of $\rho_{ab}$ and $P_a$; negative discord indicates a violation of the FSA condition. This shows that some entropies, $S_T$ for $2+\ep<q<3-\ep$, are subadditive but not firmly subadditve. It is not known whether $S_T$ is FSA over the ranges $1<q<2$ and $q\geq 3$; our numerical searches in these ranges have not found any examples that violate the FSA condition.

\section{Generalized additivity}\label{sct7}

Thusfar we have discussed whether or not the discord $\dl_K$ is non-negative, consider the following remarks relevant to the conditions under which the discord is zero.

An entropy $S_K$ is \emph{additive} if
\begin{equation}
\label{eqn26}
S_K(\rho_a\ot\rho_b)=S_K(\rho_a)+S_K(\rho_b).
\end{equation}
\begin{proposition}
\label{thm5}
For some orthonormal basis $w=\{\ket{w_j}\}$ of $\HC_a$, the condition
\begin{equation}
\label{eqn27}
S_K(\sum_j p_j\dya{w_j}\ot \rho_{bj})=S_K(\rho_a)+\sum_j p_j S_K(\rho_{bj})
\end{equation}
generalizes (or strengthens) the additivity of $S_K$ in that \eqref{eqn27} implies \eqref{eqn26}; hence it may be called \emph{generalized additivity}. \openbox
\end{proposition}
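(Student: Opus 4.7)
The plan is to derive ordinary additivity \eqref{eqn26} from the generalized additivity \eqref{eqn27} by specializing the ensemble in \eqref{eqn27} so that every conditional state $\rho_{bj}$ is the same fixed $\rho_b$, and then using the spectral freedom afforded by the fact that $S_K$ depends only on the non-zero spectrum of its input.

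First, given an arbitrary density operator $\rho_a$ on $\HC_a$, I would take its spectral decomposition $\rho_a = \sum_j p_j \dya{v_j}$ and form the auxiliary state $\rho_a' = \sum_j p_j \dya{w_j}$, where $w=\{\ket{w_j}\}$ is the basis singled out by the hypothesis. Since $\rho_a'$ and $\rho_a$ have the same spectrum $\{p_j\}$, they differ only by the unitary $U_a$ on $\HC_a$ that maps $\{v_j\}$ to $\{w_j\}$; hence $S_K(\rho_a')=S_K(\rho_a)$, and since $(U_a\ot I_b)(\rho_a\ot\rho_b)(U_a\ad\ot I_b)=\rho_a'\ot\rho_b$, also $S_K(\rho_a'\ot\rho_b)=S_K(\rho_a\ot\rho_b)$.

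Next, I would substitute $\rho_{bj}=\rho_b$ for all $j$ into \eqref{eqn27} with exactly these probabilities $p_j$, obtaining
\begin{equation*}
S_K\Bigl(\sum_j p_j \dya{w_j}\ot \rho_b\Bigr) = S_K(\rho_a') + \sum_j p_j\, S_K(\rho_b).
\end{equation*}
The left-hand side is $S_K(\rho_a'\ot\rho_b)$ and the right-hand side collapses to $S_K(\rho_a')+S_K(\rho_b)$ since $\sum_j p_j=1$. Combining this with the two spectral identifications from the previous step yields $S_K(\rho_a\ot\rho_b)=S_K(\rho_a)+S_K(\rho_b)$, which is \eqref{eqn26}, and $\rho_b$ was arbitrary so this holds for all product states.

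There is no serious technical obstacle: the only thing one must verify is that the hypothesis indeed allows the probabilities $p_j$ to be chosen freely (it does, because \eqref{eqn27} is understood to hold for every ensemble on the fixed basis $w$), and that the spectral-only dependence of $S_K$ transfers from $\rho_a$ to the tensor product $\rho_a\ot\rho_b$ via the local unitary $U_a\ot I_b$. The content of the proposition is thus essentially that although the basis $w$ is fixed, its particular orientation inside $\HC_a$ is immaterial once one invokes the spectral invariance of $S_K$.
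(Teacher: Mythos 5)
Your proposal is correct and takes essentially the same route as the paper's one-line argument: specialize \eqref{eqn27} to the ensemble in which every $\rho_{bj}$ equals a fixed $\rho_b$, so the left side becomes $S_K(\rho_a\ot\rho_b)$ and the right side collapses to $S_K(\rho_a)+S_K(\rho_b)$. The only difference is that the paper simply declares $w$ to be the eigenbasis of $\rho_a$, whereas you add the (harmless, and arguably needed given the ``for some basis $w$'' quantifier) spectral-invariance step transferring the identity from the $w$-diagonal state $\rho_a'$ to an arbitrary $\rho_a$ via a local unitary.
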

This can be seen by writing $\rho_a\ot \rho_b=\sum_j p_j \dya{w_j}\ot\rho_b$, where $w$ is the eigenbasis of $\rho_a$, and applying \eqref{eqn27} to this state. Note that \eqref{eqn27} is equivalent to saying that $\dl_K(w,b)=0$ for $\rho_{ab}=\sum_j p_j\dya{w_j}\ot \rho_{bj}$. Since $S_T$ is not additive (except for $q=1$), it cannot satisfy \eqref{eqn27}. While $S_R$ is additive, it does \emph{not} satisfy \eqref{eqn27} (again with the exception of $q=1$, \cite{OllZur01}). Thus, of the entropies in \eqref{eqn1}, $S$ is unique in satisfying this generalized additivity condition.

\section{Conclusions}\label{sct8}

An information-theoretic condition, that the quantum mutual information upper-bounds the Holevo information, was proposed as a means to further classify quantum entropy functions. It is stronger than subadditivity, hence called the \emph{firm subadditivity}. It was proven that the quadratic entropy has this property, showing that an entropy can be firmly subadditive without being strongly subadditive, while some other Tsallis entropies with exponent between 2 and 3 do not have this property, showing that an entropy can be subadditive without being firmly subadditive. Several mathematically equivalent expressions of the firm subadditivity were given in Theorem~\ref{thm2}, providing bounds on, e.g. $\chi_Q$, the quadratic Holevo quantity, which we anticipate to be useful due to its simplicity and its connection to the Hilbert-Schmidt distance in \eqref{eqn25}. The definition of discord was extended to entropy functions other than von Neumann's $S$; however, it is not yet clear whether these other discords will be useful for quantifying the non-classicality of correlations, since only $S$ satisfies the generalized additivity condition in \eqref{eqn27}.

The author thanks Robert Griffiths and Shiang-Yong Looi for helpful suggestions, and acknowledges support from the Office of Naval Research.

\bibliographystyle{unsrt}
\bibliography{FirmSA}
%\begin{thebibliography}{10}

\end{document}